\definecolor{darkblue}{rgb}{0, 0, 0.5}
\definecolor{codegreen}{rgb}{0,0.6,0}
\definecolor{codegray}{rgb}{0.5,0.5,0.5}
\definecolor{codepurple}{rgb}{0.58,0,0.82}
\definecolor{backcolour}{rgb}{0.95,0.95,0.92}
\definecolor{deepblue}{rgb}{0,0,.9}
\lstdefinestyle{mystyle}{
  language = Python,
  otherkeywords={@},
  deletekeywords={filter},
  deletendkeywords={max, min, abs},
  backgroundcolor=\color{white}, commentstyle=\color{codegreen},
  keywordstyle=\color{magenta},
  numberstyle=\tiny\color{codegray},
  stringstyle=\color{codepurple},
  basicstyle=\ttfamily\footnotesize,
  breakatwhitespace=false,         
  breaklines=true,                 
  captionpos=b,                    
  keepspaces=true,                 
  numbers=left,                    
  numbersep=5pt,                  
  showspaces=false,                
  showstringspaces=false,
  showtabs=false,                  
  tabsize=2,
  emph={zeros, prange, jit, power, std, rvs, maximum, roll, convolve, array, max, min, pdf, ppf, uniform, random, sqrt, linspace, sort, mean, percentile, sign, abs, powerlaw, beta, stats},
  emphstyle=\color{deepblue}
}
\newcommand{\F}{\mathcal{F}}
\newcommand{\G}{\mathbb{G}}
\newcommand{\eG}{\mathbb{G}}
\newcommand{\R}{\mathbb{R}}
\newcommand{\E}{\mathbf{E}}
\newcommand{\Pb}{\mathbb{P}}
\newcommand{\Prb}{\mathbb{P}}
\newcommand{\bydef}{\coloneqq}
\newcommand{\bigPar}[1]{\left( #1 \right)}
\newcommand{\sumin}{\sum_{i=1}^n}
\newtheorem{assumption}{Assumption}
\crefname{assumption}{Assumption}{Assumptions}
\crefname{prop}{Proposition}{Propositions}
\crefname{lem}{Lemma}{Lemmas}
\crefname{thm}{Theorem}{Theorems}
\newtheorem{cor}{Corollary}
\newtheorem{lem}{Lemma}
\newtheorem{thm}{Theorem}
\newtheorem*{thm*}{Theorem}
\title{Bias correction and uniform inference\\for the quantile density function}
\author{
\setcounter{footnote}{1}
    Grigory Franguridi\thanks{
		Department of Economics, University of Southern California.
		Email: \href{franguri@usc.edu}{franguri@usc.edu}
	}
}
\date{\today}
\begin{document}
\maketitle
\begin{abstract}
    For the kernel estimator of the quantile density function (the derivative of the quantile function), I show how to perform the boundary bias correction, establish the rate of strong uniform consistency of the bias-corrected estimator, and construct the confidence bands that are asymptotically exact uniformly over the entire domain $[0,1]$. The proposed procedures rely on the pivotality of the studentized bias-corrected estimator and known anti-concentration properties of the Gaussian approximation for its supremum.
\end{abstract}

\section{Introduction}

The derivative of the quantile function, the \emph{quantile density} (QD), has been long recognized as an important object in statistical inference.\footnote{This function is sometimes also called the \emph{sparsity function} \citep{tukey1965part}.} In particular, it arises as a factor in the asymptotically linear expansion for the quantile function \citep{bahadur1966note,kiefer1967bahadur}, and hence may be used for asymptotically valid inference on quantiles \citep{csorgo1981strong,csorgo1981two,koenker_2005}.

Given its importance, several estimators of the QD have been proposed in the literature. The most widely used estimator is the \emph{kernel quantile density} (KQD), originally developed by \citet{siddiqui1960distribution} and \citet{bloch1968simple} for the case of rectangular kernel, and generalized to arbitrary kernels by \citet{falk1986estimation}, \citet{welsh1988asymptotically}, \citet{csorgHo1991estimating}, and \cite{jones1992estimating}. This estimator is simply a smoothed derivative of the empirical quantile function, where smoothing is performed via convolution with a kernel function.

Similarly to the classical case of kernel density estimation, the KQD suffers from bias close to the boundary points $\{0,1\}$ of its domain $[0,1]$, rendering the estimator inconsistent. To the best of my knowledge, no bias correction procedures have been developed for the QD.

In this paper, I show how to perform correction for the boundary bias, recovering strong uniform consistency for the resulting bias-corrected KQD (BC-KQD) estimator. The bias correction is computationally cheap and is based on the fact that the bias of the KQD is approximately equal to the integral of the localized kernel function, a quantity that only depends on the chosen kernel and bandwidth. I also develop an algorithm for construction of the uniform confidence bands around the QD on its entire domain $[0,1]$. This procedure relies on the fact that the studentized BC-KQD exhibits an influence function that is \emph{pivotal}. This makes it possible to calculate the critical values by simulating from either the known influence function or the studentized BC-KQD under an alternative (pseudo) distribution of the data.

The rest of the paper is organized as follows. \cref{sec:KQD} outlines the framework and defines the KQD estimator. \cref{sec:BC} introduces the BC-KQD estimator and establishes its Bahadur-Kiefer expansion. \cref{sec:CB} develops the uniform confidence bands based on the BC-KQD. \cref{sec:MC} illustrates the performance of the confidence bands in a set of Monte Carlo simulations. \cref{sec:conclusion} concludes. Proofs of theoretical results are given in the Appendix.

\section{Setup and kernel quantile density estimator}\label{sec:KQD}

The data consist of independent identically distributed draws $X_1,\dots,X_n$ from a distribution on $\R$ with a cumulative distribution function (CDF) $F$ satisfying the following assumption.

\begin{assumption}[Data generating process]\label{ass:dgp} The distribution $F$ has compact support $[\underline x, \bar x]$ and admits a density $f=F'$ that is continuously differentiable and bounded away from zero and infinity on $[\underline x, \bar x]$.
\end{assumption}
Assumption \ref{ass:dgp} implies that the quantile density
\begin{align}
    q(u) \bydef \frac{d F^{-1}(u)}{du} = \frac{1}{f(F^{-1}(u))}
\end{align}
is continuously differentiable and bounded away from zero and infinity on the support $[\underline x, \bar x]$.

Let $X_{(1)}\le \cdots \le X_{(n)}$ be the order statistics of the sample $X_1,\dots,X_n$, and let $\hat Q$ denote the empirical quantile function,
\begin{align}
    \hat Q(u) \bydef
    \begin{cases}
    X_{(\left\lfloor{nu}\right\rfloor +1)}, \quad u\in [0,1),\\
    X_{(n)}, \quad u=1,
    \end{cases}
\end{align}
The KQD estimator is defined as
\begin{align}
    \hat q_h(u) &\bydef \int_{0}^{1} K_h(u-z) \, d\hat Q(z) = \sum_{i=1}^{n-1} K_h\left(u-\frac{i}{n}\right) \bigPar{X_{(i+1)}-X_{(i)}}, \quad u\in [0,1],
\end{align}
where $K$ is a kernel function, $K_h(z) \bydef h^{-1}K\left(h^{-1}z\right)$, and $h>0$ is bandwidth \citep[see, e.g.,][]{csorgHo1991estimating}. We impose the following assumptions on the kernel and bandwidth.

\begin{assumption}[Kernel function]\label{ass:kernel}
The kernel $K$ is a nonnegative function of bounded variation that is supported on $[-1/2,1/2]$, symmetric around $0$, and satisfies
\begin{align}
    \int_\R K(x) \, dx = 1, \quad \int_\R K^2(x) \, dx < \infty.
\end{align}
\end{assumption}

\begin{assumption}[Bandwidth, estimation]\label{ass:bandwidth-for-consistency}
The bandwidth $h=h_n$ is such that $h_n \to 0$ and
\begin{enumerate}
    \item \label{ass:band-large-cons} $h_n^{-1} = o\left(n^{1/2} (\log n)^{-1} (\log\log n)^{-1/2}\log h^{-1} \right)$,
    \item \label{ass:band-undersmooth-cons} $h_n=o\left(n^{-1/3}(\log h^{-1})^{-1/3}\right)$.
\end{enumerate}
\end{assumption}

\begin{assumption}[Bandwidth, inference]\label{ass:bandwidth}
The bandwidth $h=h_n$ is such that $h_n \to 0$ and
\begin{enumerate}
    \item \label{ass:band-large} $h_n^{-1} = o\left(n^{1/2} (\log n)^{-2} (\log\log n)^{-1/2} \right)$,
    \item \label{ass:band-undersmooth} $h_n=o\left(n^{-1/3}(\log n)^{-1}\right)$.
\end{enumerate}
\end{assumption}
Assumption \ref{ass:kernel} is standard; boundedness of the total variation of $K$ ensures that the class
\begin{align}
    \mathcal{F} \bydef \left\{K \left(\frac{u-\cdot}{h}\right), \,\,u \in [0,1], h >0 \right\}
\end{align}
is a bounded VC class of measurable functions, see, e.g., \citet{nolan1987u}.

Assumptions \ref{ass:bandwidth-for-consistency} and \ref{ass:bandwidth} are essentially the same, up to the log terms in the bandwidth rates, with Assumption \ref{ass:bandwidth-for-consistency} being slightly weaker. Assumption \ref{ass:bandwidth}.\ref{ass:band-large} states that the bandwidth rate is large enough (slightly larger than $n^{-1/2}$) to guarantee that the smoothed remainder of the classical Bahadur-Kiefer expansion vanishes asymptotically, see the proof of \cref{cor:conv-rate} below. Assumption \ref{ass:bandwidth}.\ref{ass:band-undersmooth} imposes the \emph{undersmoothing} bandwidth rate (slightly smaller than $n^{-1/3}$), which ensures that the smoothing bias disappears fast enough for the confidence bands to be valid, see the proof of \cref{thm:CB} below.

\section{Bias correction and Bahadur-Kiefer expansion}\label{sec:BC}

In this section, I introduce the bias-corrected estimator and develop its asymptotically linear expansion with an explicit a.s. uniform rate of the remainder (the Bahadur-Kiefer expansion).

To see the necessity of bias correction, note that, for $u$ close to the boundary, the kernel weights $K_h(u-i/n)$, $i=1,\dots,n-1$, do not approximately sum up to one, rendering the KQD $\hat q_h(u)$ inconsistent. Therefore, dividing the KQD by the sum of the kernel weights (or the corresponding integral of the kernel function) may eliminate the boundary bias. To this end, define
\begin{align}
    \psi_h(u) &\bydef \int_0^1 K_h(u-z) \, dz = \int_{\max(u-h/2,0)}^{\min(u+h/2,1)} K_h(u-z) \, dz, \quad u\in[0,1].
\end{align}
For computational purposes, note that $\psi_h$ is symmetric around $1/2$ (i.e. $\psi_h(u) = \psi_h(1-u)$ for all $u \in [0,1]$), $\psi_h\in[1/2,1]$ and $\psi_h(u) = 1$ for $u \in [h/2,1-h/2]$.
The bias-corrected KQD (BC-KQD) is then defined as
\begin{align}
    \hat q_h^{bc}(u) \bydef \frac{\hat q_h(u)}{\psi_h(u)} = \frac{\sum_{i=1}^{n-1} K_h\left(u-\frac{i}{n}\right) \bigPar{X_{(i+1)}-X_{(i)}}}{\int_0^1 K_h(u-z) \, dz}, \quad u\in [0,1].
\end{align}

The following theorem establishes that the studentized BC-KQD is approximately equal to the centered kernel density estimator with an approximation error that converges to zero a.s. at an explicit uniform rate. Since this result resembles (and relies on) the classical asymptotically linear expansion for the quantile function \citep{bahadur1966note,kiefer1967bahadur}, we call it the \emph{Bahadur-Kiefer expansion for the BC-KQD}. Denote $U_i=F(X_i)$, $i=1,\dots,n.$

\begin{thm}[Bahadur-Kiefer expansion for the BC-KQD]\label{Thm:BK-expansion}
Suppose Assumptions \ref{ass:dgp} and \ref{ass:kernel} are satisfied and $h_n \to 0$. Then the following representation holds uniformly in $u \in [0,1]$,
\begin{align}
    Z_n^{bc}(u) = -\G_n(u) + O_{a.s.}\left( n^{1/2} h^{3/2} + h \log h^{-1} + h^{-1/2}n^{-1/4}(\log n)^{1/2}(\log \log n)^{1/4} \right),
\end{align}
where
\begin{align}
    Z_n^{bc}(u) &\bydef \frac{\sqrt{nh}\left( \hat q_h^{bc}(u)-q(u) \right)}{q(u) / \psi_h(u)},\\
    \G_n(u) &\bydef \frac{1}{\sqrt{nh}}\sumin\left[K\bigPar{\frac{U_i-u}{h}}-\E K\bigPar{\frac{U_i-u}{h}} \right] \\
    &= \sqrt{nh} \cdot \frac{1}{n}\sumin \left[K_h(U_i-u) - \psi_h(u)\right].
\end{align}
\end{thm}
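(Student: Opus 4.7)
The plan is to combine Kiefer's refinement of the classical Bahadur representation with Stieltjes integration by parts against the smoothing kernel, then isolate the dominant term $-\G_n(u)$ by carefully decomposing the remaining expressions. Writing $U_i = F(X_i)$ so that $\hat Q(z) = F^{-1}(\hat Q_U(z))$, where $\hat Q_U$ is the empirical quantile function of $U_1,\dots,U_n$, a first-order Taylor expansion of $F^{-1}$ together with Kiefer's (1967) uniform bound on the Bahadur remainder for the uniform quantile process yields
\begin{align*}
    \hat Q(z) - F^{-1}(z) = -q(z)(\hat F_U(z) - z) + \rho_n(z), \qquad \sup_z |\rho_n(z)| = O_{a.s.}\bigPar{n^{-3/4}(\log n)^{1/2}(\log\log n)^{1/4}},
\end{align*}
where $\hat F_U$ denotes the empirical CDF of $U_1,\dots,U_n$ and the second-order Taylor error is absorbed into $\rho_n$ by the law of the iterated logarithm (LIL).

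Starting from $\hat q_h(u) - q(u)\psi_h(u) = \int_0^1 K_h(u-z)\,d[\hat Q(z)-F^{-1}(z)] + \text{Bias}(u)$, where $\text{Bias}(u) \bydef \int_0^1 K_h(u-z)(q(z)-q(u))\,dz$, I would substitute the above representation and apply the Stieltjes product rule to obtain
\begin{align*}
    \hat q_h(u) - q(u)\psi_h(u) = -\frac{q(u)\G_n(u)}{\sqrt{nh}} - A(u) - [B_1(u) - \text{Bias}(u)] + \text{Bias}(u) + \int_0^1 K_h(u-z)\,d\rho_n(z),
\end{align*}
where
\begin{align*}
    A(u) \bydef \int_0^1 K_h(u-z)\,q'(z)(\hat F_U(z)-z)\,dz, \qquad B_1(u) \bydef \frac{1}{n}\sumin K_h(u-U_i)(q(U_i)-q(u)).
\end{align*}
The leading term becomes $-\G_n(u)$ after scaling by $\sqrt{nh}/q(u)$. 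The deterministic smoothing bias $\text{Bias}(u)$ is $O(h\psi_h(u))$ by Lipschitz continuity of $q$, contributing the $n^{1/2}h^{3/2}$ remainder after scaling; the integral $\int K_h\,d\rho_n$ is bounded by integration by parts by $\bigPar{2\|K_h\|_\infty + \operatorname{TV}(K_h)}\sup_z|\rho_n(z)| = O(h^{-1})\cdot O_{a.s.}(n^{-3/4}(\log n)^{1/2}(\log\log n)^{1/4})$, matching the third remainder after scaling.

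The main technical obstacle is the uniform-in-$u$ control of $A(u)$ and the centered process $B_1(u) - \text{Bias}(u)$, the two smoothed empirical-process terms. For $B_1 - \text{Bias}$, the summands have envelope $O(1)$ and variance $O(h)$ (both factors localize to $|U_i-u|\leq h/2$), so a uniform Bernstein/Talagrand inequality for the bounded VC class associated with $\mathcal{F}$ from \cref{ass:kernel} yields $\sup_u|B_1(u) - \text{Bias}(u)| = O_{a.s.}(\sqrt{h\log h^{-1}/n})$, contributing $O(h\sqrt{\log h^{-1}}) \leq O(h\log h^{-1})$ after scaling. For $A$, by Fubini one can write $A(u) = n^{-1}\sumin \Psi(U_i;u) - \E\Psi(U_1;u)$ with $\Psi(x;u) \bydef \int_x^1 K_h(u-z)q'(z)\,dz$; since $\Psi(\cdot;u)$ approximates $q'(u)\II(\cdot < u)$ up to a window of width $h$, the uniform LIL for the empirical CDF gives $\sup_u |A(u)| = O_{a.s.}(\sqrt{\log\log n/n})$, contributing $O(\sqrt{h\log\log n})$, which is dominated by the third remainder term under \cref{ass:bandwidth}. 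All constants depend only on $\|q\|_\infty$, $\|q'\|_\infty$, and $\psi_h(u) \leq 1$, so boundary points $u \in [0,h/2)\cup(1-h/2,1]$ require no separate treatment.
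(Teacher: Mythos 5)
Your proposal is correct and shares the paper's overall architecture (classical Bahadur--Kiefer expansion with the $n^{-3/4}(\log n)^{1/2}(\log\log n)^{1/4}$ remainder, Stieltjes integration by parts against $K_h$, total-variation bound $TV(K_h)=h^{-1}TV(K)$, and the $O(h)$ smoothing bias yielding the $n^{1/2}h^{3/2}$ term), but it handles the middle remainder differently. The paper keeps $\int_0^1 (q(z)-q(u))(\hat F(Q(z))-z)\,dK_h(u-z)$ as a single term $R_n^{II}(u)$ and bounds it by $Mh\,TV(K_h)$ times a uniform rate for the empirical process; you instead apply the Stieltjes product rule to $d[q(z)(\hat F_U(z)-z)]$, which splits that term into the smooth piece $A(u)$ (controlled by the Chung--Smirnov LIL, giving $O_{a.s.}(\sqrt{h\log\log n})$ after scaling) and the centered kernel sum $B_1(u)-\mathrm{Bias}(u)$ (controlled by a Talagrand/Gin\'e--Guillou deviation bound over the product class $\{K_h(u-\cdot)(q(\cdot)-q(u))\}$, giving $O_{a.s.}(h\sqrt{\log h^{-1}})$). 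Your route produces the empirical sum $\frac1n\sum_i K_h(u-U_i)$ directly without the second integration by parts (the paper's Lemma \ref{Lem:int-by-parts-kde}), at the cost of needing the VC/measurability argument for the product class, which you only gesture at; note that a cruder bound already suffices there, since $|B_1(u)|\le (Mh/2)\cdot\frac1n\sum_i K_h(u-U_i)=O_{a.s.}(h)$ uniformly, which after multiplication by $\sqrt{nh}$ is exactly the $n^{1/2}h^{3/2}$ term. Two small points: first, your appeal to \cref{ass:bandwidth} to dominate the $A(u)$ contribution is unnecessary (and not available, since \cref{Thm:BK-expansion} assumes only $h_n\to 0$)---but harmless, because $\sqrt{h\log\log n}\le \max\bigl(n^{1/2}h^{3/2},\,h^{-1/2}n^{-1/4}(\log n)^{1/2}(\log\log n)^{1/4}\bigr)$ always holds by comparing with the geometric mean of the two rates. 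Second, your decomposition in fact repairs a weak spot in the paper's own display for $R_n^{II}$, which pulls $\sup_z|q(z)-q(u)|$ outside a signed integral while keeping $\bigl|\int(\hat F(Q(z))-z)\,dK_h(u-z)\bigr|$ rather than $\int|\hat F(Q(z))-z|\,|dK_h(u-z)|$; your version avoids that issue entirely.
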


This representation allows us to establish the exact rate of strong uniform consistency of the BC-KQD under a bandwidth that achieves undersmoothing (Assumption \ref{ass:bandwidth-for-consistency}.\ref{ass:band-undersmooth-cons}).

\begin{cor}[Strong uniform consistency of BC-KQD]\label{cor:conv-rate}
Suppose Assumptions \ref{ass:dgp}, \ref{ass:kernel}, and \ref{ass:bandwidth-for-consistency} hold. Then
\begin{align}
    \lim_{n \to \infty} \sqrt{\frac{nh_n}{2\log h_n^{-1}}} \sup_{u \in [0,1]} \left| \hat q_h^{bc}(u) - q(u) \right| = \left(\int_\R K^2(x) \, dx \right)^{1/2} \text{ a.s.}
\end{align}
\end{cor}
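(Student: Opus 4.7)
The strategy is to apply the Bahadur-Kiefer expansion of \cref{Thm:BK-expansion} and reduce the claim to a sharp a.s.\ limit law for the supremum of the kernel-empirical process $\G_n$ on $[0,1]$. Rearranging \cref{Thm:BK-expansion} yields
\begin{align*}
\sqrt{\frac{nh}{2\log h^{-1}}}\bigl(\hat q_h^{bc}(u)-q(u)\bigr)
= -\frac{q(u)/\psi_h(u)}{\sqrt{2\log h^{-1}}}\,\G_n(u)
+ O_{a.s.}\!\left(\frac{r_n}{\sqrt{\log h^{-1}}}\right),
\end{align*}
where $r_n \bydef n^{1/2}h^{3/2} + h\log h^{-1} + h^{-1/2}n^{-1/4}(\log n)^{1/2}(\log\log n)^{1/4}$ is the remainder from \cref{Thm:BK-expansion}.

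I would then verify that $r_n/\sqrt{\log h^{-1}}=o(1)$ under \cref{ass:bandwidth-for-consistency}, so that the remainder contributes $o(1)$ a.s.\ uniformly in $u$. A term-by-term check shows that the first two summands of $r_n$ vanish after division by $\sqrt{\log h^{-1}}$ by the undersmoothing rate \cref{ass:bandwidth-for-consistency}.\ref{ass:band-undersmooth-cons} (since $h^{3/2}n^{1/2}=o((\log h^{-1})^{-1/2})$ and $h\sqrt{\log h^{-1}}\to 0$), while the third summand vanishes by the lower bandwidth bound \cref{ass:bandwidth-for-consistency}.\ref{ass:band-large-cons}. The claim thus reduces to showing
\begin{align*}
\sup_{u\in[0,1]}\frac{q(u)/\psi_h(u)}{\sqrt{2\log h^{-1}}}\,|\G_n(u)|
\to \left(\int_\R K^2\right)^{1/2}\quad\text{a.s.}
\end{align*}

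Finally I would invoke a Stute-type sharp a.s.\ sup-rate result for kernel-empirical processes. The bounded VC property of $\mathcal F$ noted after \cref{ass:bandwidth} and the lower bandwidth bound \cref{ass:bandwidth-for-consistency}.\ref{ass:band-large-cons} place us within the scope of such results, and a direct variance computation yields $\E\G_n(u)^2\to\int K^2$ on the interior $[h/2,1-h/2]$ (where $\psi_h\equiv 1$), producing the stated constant. The main obstacle is the bookkeeping at the boundary: near $\{0,1\}$ the correction weight $q(u)/\psi_h(u)$ can be as large as $2q(u)$ while the variance of $\G_n(u)$ itself shrinks to its boundary value; one must argue uniformly over the closed interval $[0,1]$ that the supremum is controlled by the interior contribution $(\int K^2)^{1/2}$, which requires a uniform Stute-type limit over $[0,1]$ rather than only over a compact sub-interval of $(0,1)$.
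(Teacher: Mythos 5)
Your route is essentially the paper's: apply \cref{Thm:BK-expansion}, check that the remainder $r_n/\sqrt{2\log h^{-1}}$ vanishes under \cref{ass:bandwidth-for-consistency} (your term-by-term verification is correct and matches what the paper asserts without detail), and then invoke a sharp a.s.\ limit law for $\sup_u|\G_n(u)|/\sqrt{2\log h^{-1}}$. The ``Stute-type'' result you gesture at is exactly what the paper uses, namely \citet[Proposition 3.1]{gine2002rates}, whose conditions $(K_2)$, (2.11) and $(W_2)$ are verified from \cref{ass:kernel} and \cref{ass:bandwidth-for-consistency}, with the density condition handled via their Remark 3.5 because $U_i=F(X_i)$ is uniform; since $f_U\equiv 1$ the limit constant is $\left(\int_\R K^2\right)^{1/2}$. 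The one point where you and the paper part ways is the weight $q(u)/\psi_h(u)$ that you correctly carry along: the paper's proof divides the expansion for the \emph{studentized} process $Z_n^{bc}$ by $\sqrt{2\log h_n^{-1}}$ and applies the limit theorem to $\G_n$ alone, i.e.\ it never reconciles $\sup_u|Z_n^{bc}(u)|$ with $\sqrt{nh}\sup_u|\hat q_h^{bc}(u)-q(u)|$, which differ exactly by the factor you flag. So the ``obstacle'' you identify is not a defect of your argument relative to the paper's --- it is a genuine loose end that the paper's own proof also leaves unaddressed (resolving it would generically change the limit to something involving $\sup_u q(u)$ unless one restates the corollary for the studentized process). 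Apart from making that reconciliation explicit, your proposal is the paper's proof.
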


One of the convenient features of the KQD (and BC-KQD) estimator is that its bandwidth has a natural scale $[0,1]$ which is independent of the data generating process. Hence, I put aside the choice of constant $c$ in the bandwidth $h=c n^{-\eta}$ and suggest setting $c=1$.

Regarding the choice of the rate $\eta$, ignoring the log terms, it is easy to establish the rate-optimal bandwidth, which is achieved whenever the rate of the smoothing bias $n^{1/2}h^{3/2}$ matches that of the remainder in the original Bahadur-Kiefer expansion $n^{-1/4}h^{-1/2}$. It follows that the nearly-optimal bandwidth is
\begin{align}
    h_n^{opt} = O\left(n^{-3/8}\right).
\end{align}
Under this bandwidth, the exact rate of strong uniform convergence is
\begin{align}
    O\left(\frac{\log n}{n^{5/16}}\right),
\end{align}
which is just slightly worse than the familiar ``cube-root'' rate \citep{kim1990cube}.

\section{Uniform confidence bands}\label{sec:CB}

Suppose we had access to valid approximations $c_{n,\tau}$, $c_{n,\tau}^{abs}$ to the $\tau$-quantiles of the random variables
\begin{align}
    W_n^{bc} &= \sup_{u\in[0,1]} Z_n^{bc}(u),\\
    W_n^{bc,abs} &= \sup_{u\in[0,1]}\left| Z_n^{bc}(u) \right|,
\end{align}
respectively, in the sense that
\begin{align}
    \Prb(W_n^{bc} \le c_{n,\tau}) &= \tau + o(1), \label{eq:valid-cv-abstract}\\
    \Prb(W_n^{bc,abs} \le c_{n,\tau}^{abs}) &= \tau + o(1).
\end{align}
Then the following confidence bands for $q(\cdot)$ would be asymptotically valid at the confidence level $1-\alpha$:
\begin{enumerate}
    \item the one-sided CB
        \begin{align}
            \left[ \frac{\hat q_h^{bc}(u)}{1 + \frac{c_{n,1-\alpha}}{\psi_h(u)\sqrt{nh}}}, \, +\infty\right), \quad u \in [0,1], \label{eq:CB-1}
        \end{align}
    \item the one-sided CB
        \begin{align}
            \left(-\infty, \, \frac{\hat q_h^{bc}(u)}{1 - \frac{c_{n,1-\alpha}}{\psi_h(u)\sqrt{nh}}} \right], \quad u \in [0,1], \label{eq:CB-2}
        \end{align}
    \item the two-sided CB
        \begin{align}
            q(u) \in \left[\frac{\hat q_h^{bc}(u)}{1 + \frac{c_{n,1-\alpha/2}^{abs}}{\psi_h(u)\sqrt{nh}}}, \,\, \frac{\hat q_h^{bc}(u)}{1 - \frac{c_{n,1-\alpha/2}^{abs}}{\psi_h(u)\sqrt{nh}}} \right], \quad u \in [0,1]. \label{eq:CB-3}
        \end{align}
\end{enumerate}

I propose two ways of obtaining such approximate critical values, both making use of the pivotality of the studentized bias-corrected KQD $Z_n^{bc}(u)$, see \cref{Thm:BK-expansion}. I focus on the one-sided critical value $c_{n,\tau}$ for simplicity; the proofs for the two-sided critical value are analogous.

The first approach is to let $c_{n,\tau}$ be the $\tau$-quantile of the random variable
\begin{align}
    W_n^\G = \sup_{u \in [0,1]} \G_n(u).
\end{align}
Since $\G_n$ is a known process, $c_{n,\tau}$ can be obtained easily by simulation. In principle, $c_{n,\tau}$ can be tabulated for different choices of the kernel $K$ and values of the sample size $n$ and the bandwidth $h$.

The other approach is to let $c_{n,\tau}$ be the $\tau$-quantile of the random variable
\begin{align}
    W_n^{U[0,1]} \bydef \sup_{u\in[0,1]} Z_n^{bc,U[0,1]}(u),
\end{align}
 where $Z_n^{bc,U[0,1]}(u)$ is equal to $Z_n^{bc}(u)$ evaluated at a pseudo-sample $\tilde X_1,\dots,\tilde X_n \sim U[0,1]$ in place of the original sample. For the uniform distribution, $q \equiv 1$, and hence
\begin{align}
    Z_n^{bc,U[0,1]}(u) \bydef \frac{\sqrt{nh}(\tilde q_h^{bc}(u)-q(u))}{q(u)/\psi_h(u)} = \sqrt{nh}(\tilde q_h(u)-\psi_h(u)),
\end{align}
where $\tilde q_n(u)$ is the (non-bias-corrected) KQD calculated using the pseudo-sample, i.e.
\begin{align}
    \tilde q_n(u) = \sum_{i=1}^{n-1} K_h\left(u-\frac{i}{n}\right) \bigPar{\tilde X_{(i+1)}-\tilde X_{(i)}}, \quad u\in [0,1].
\end{align}

The following theorem establishes that the two aforementioned approximations to the critical values are valid, implying the asymptotic validity of the confidence bands. These confidence bands are centered at an AMSE-suboptimal estimator $\hat q_h^{bc}$ and are expected to shrink at a rate slightly slower than the minimax optimal rate, as noted by \citet[p.1795]{chernozhukov2014anti}. This is compensated for by the confidence bands exhibiting the coverage that is asymptotically \emph{exact}.

\begin{thm}[Exactness of confidence bands]\label{thm:CB}
Suppose Assumptions \ref{ass:dgp}, \ref{ass:kernel}, and \ref{ass:bandwidth} hold. Then
\begin{align}
     &\lim_{n \to \infty} \sup_{t\in\R}\left| \Prb\left(W_n^{bc} \le t \right) - \Prb\left(W_n^{\G} \le t \right) \right| = 0,\\
     &\lim_{n \to \infty} \sup_{t\in\R}\left| \Prb\left(W_n^{bc} \le t \right) - \Prb\left(W_n^{U[0,1]} \le t \right) \right| = 0,
\end{align}
and hence the confidence bands \eqref{eq:CB-1}, \eqref{eq:CB-2}, and \eqref{eq:CB-3} are asymptotically exact.
\end{thm}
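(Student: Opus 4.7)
The plan is to use \cref{Thm:BK-expansion} to reduce the studentized supremum $W_n^{bc}$ to the supremum of $-\G_n$, then couple $\G_n$ with a centered Gaussian process and use the anti-concentration inequality for suprema of Gaussian processes in the spirit of \citet{chernozhukov2014anti} to translate the coupling into a Kolmogorov-distance bound. I treat only the first display; the second reduces to the first via the probability integral transform, and validity of the bands \eqref{eq:CB-1}--\eqref{eq:CB-3} is then immediate from monotonicity of the studentization $x\mapsto \hat q_h^{bc}(u)/(1\pm x/(\psi_h(u)\sqrt{nh}))$: the coverage event $\{q(u)\in [\cdot]\,\forall u\}$ translates exactly into $\{W_n^{bc}\le c_{n,1-\alpha}\}$ (or the absolute-value analogue).

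The first step is to substitute \cref{ass:bandwidth} into \cref{Thm:BK-expansion}. A direct calculation shows that each of $n^{1/2}h^{3/2}$, $h\log h^{-1}$, and $h^{-1/2}n^{-1/4}(\log n)^{1/2}(\log\log n)^{1/4}$ is $o((\log n)^{-1/2})$ under the stated rates; the log factors in \cref{ass:bandwidth}.\ref{ass:band-large} are calibrated precisely so that the last of these three is $o((\log n)^{-1/2})$. Setting $V_n\bydef\sup_{u\in[0,1]}(-\G_n(u))$, one therefore has $|W_n^{bc}-V_n|=o_{a.s.}((\log n)^{-1/2})$, so it suffices to compare the laws of $V_n$ and $W_n^\G$ in Kolmogorov distance.

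Because $\F$ is a bounded VC class (\cref{ass:kernel}), one can construct on a suitable probability space a centered Gaussian process $\G_n^*$ having the same covariance kernel as $\G_n$ and satisfying $\sup_u|\G_n(u)-\G_n^*(u)|=o_P((\log n)^{-1/2})$; this is a standard Koml\'os-Major-Tusn\'ady/Rio-type Gaussian coupling over a VC class, and \cref{ass:bandwidth}.\ref{ass:band-large} is exactly the ``bandwidth not too small'' condition that delivers a coupling rate faster than $(\log n)^{-1/2}$. Since $\G_n^*$ is centered Gaussian, the processes $\G_n^*$ and $-\G_n^*$ have identical distributions, hence $\sup_u\G_n^*(u)$ and $\sup_u(-\G_n^*(u))$ share the same law. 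By \citet[Cor.~2.1]{chernozhukov2014anti}, the density of the supremum of a centered Gaussian process indexed by $[0,1]$ is of order at most $\sqrt{\log n}$, so perturbing such a process in sup norm by $o_P((\log n)^{-1/2})$ changes the distribution function of its supremum by $o(1)$ uniformly in $t$. Applying this twice, to both $\G_n$ and $-\G_n$, and chaining through the Gaussian symmetry yields $\sup_t|\Prb(V_n\le t)-\Prb(W_n^\G\le t)|=o(1)$, which together with step~1 proves the first display. For the second display, the probability integral transform makes $U_i=F(X_i)$ i.i.d.\ $U[0,1]$, so $\G_n$ has the same law as the analogous process $\tilde\G_n$ built from $\tilde X_1,\dots,\tilde X_n\sim U[0,1]$; \cref{Thm:BK-expansion} applied to the pseudo-sample (for which \cref{ass:dgp} holds trivially with $q\equiv 1$) gives $|W_n^{U[0,1]}-\sup_u(-\tilde\G_n(u))|=o_{a.s.}((\log n)^{-1/2})$, and the rest of the argument is identical.

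The main technical obstacle is the Gaussian coupling at rate $o((\log n)^{-1/2})$ for the empirical process $\G_n$ over a shrinking-bandwidth VC class, since off-the-shelf strong approximations have to be combined carefully with \cref{ass:bandwidth}.\ref{ass:band-large} to beat the $\sqrt{\log n}$ inflation factor coming from anti-concentration. Once the coupling is in place, the Gaussian symmetry and the anti-concentration inequality reduce everything else to bookkeeping.
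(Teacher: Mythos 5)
Your proposal is correct in substance and shares the paper's skeleton --- reduce $W_n^{bc}$ to the supremum of the known process $\G_n$ via \cref{Thm:BK-expansion} with an $o((\log n)^{-1/2})$ remainder, pass to a Gaussian process with the same covariance, and use anti-concentration for Gaussian suprema (whose expectation grows like $\sqrt{\log h^{-1}}\asymp\sqrt{\log n}$) to convert the coupling into a uniform Kolmogorov bound --- but it implements the Gaussian approximation step differently. The paper does not construct a process-level coupling: it invokes \citet[Proposition 3.1]{chernozhukov2014gaussian}, which couples only the \emph{suprema}, $W_n^\G=\tilde W_n+O_p((nh)^{-1/6}\log n)$, and then extends their Lemmas 2.3--2.4 to generic random variables admitting such a sup-Gaussian approximation. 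You instead posit a process-level coupling $\sup_u|\G_n(u)-\G_n^*(u)|=o_P((\log n)^{-1/2})$, attributed to a ``standard KMT/Rio-type coupling over a VC class.'' That attribution is too glib as stated --- KMT-type strong approximations are not available for abstract VC classes at such rates --- but your route does go through here because of the one-dimensional convolution structure: writing $\G_n(u)=\sqrt{h}\int K_h(z-u)\,d\alpha_n(z)$ and integrating by parts against the KMT Brownian bridge gives a process coupling of order $n^{-1/2}h^{-1/2}\log n\cdot TV(K)$, which is far below $(\log n)^{-1/2}$ under \cref{ass:bandwidth}.\ref{ass:band-large}. So your step 2 should be justified via the univariate KMT theorem plus integration by parts rather than a generic VC-class coupling. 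On the other hand, your explicit handling of the sign --- comparing $\sup_u(-\G_n(u))$ to $\sup_u\G_n(u)$ through the symmetry of the centered Gaussian limit --- is more careful than the paper, which writes $W_n^{bc}=W_n^\G+O_{a.s.}(\cdots)$ directly and leaves that identification implicit. The treatment of the second display (probability integral transform) and of the passage from critical-value validity to band exactness matches the paper.
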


\section{Monte Carlo study}\label{sec:MC}

In this section I study the finite-sample behavior of the proposed confidence bands in a set of Monte Carlo simulations.

I consider the following distributions of the data, all supported on the interval $[0,1]$: (i) uniform[0,1] distribution (ii) the distribution $N(1/2,1)$ truncated to $[0,1]$ (iii) the linear distribution with the PDF $f(x) = x+1/2$, $x\in[0,1]$. I set the nominal confidence level to be $1-\alpha \in \{0.8,0.9,0.95,0.99\}$ and the sample size $n \in \{100,500,1000,5000\}$. The critical values are obtained by simulating $\G_n(u)$ and calculating the quantiles of its supremum on the grid $u \in \{0.005,0.015,0.02,\dots,0.995\}$, with the number of simulations set to $20000$ (simulation results for the critical values based on $Z_n^{bc,U[0,1]}(u)$ are very similar, so I do not report them here). I use the kernel corresponding to the standard normal distribution truncated to $[-1/2,1/2]$ and the nearly-optimal bandwidth $h=cn^{-3/8}$, where I set $c=1$ since the scale of the bandwidth is $[0,1]$, see \cref{sec:BC}.

In \cref{fig:CB}, included for illustration, I plot 100 independent realizations of the $90\%$ confidence bands for the linear distribution, along with the true quantile density (in blue). \cref{tab:coverage} contains simulated coverage values for the two-sided confidence bands. The coverage is almost invariant to the distribution of the data, but the size distortion tends to be smaller for higher nominal confidence levels.

\begin{figure}[t!]
\centering
\includegraphics[scale=0.5]{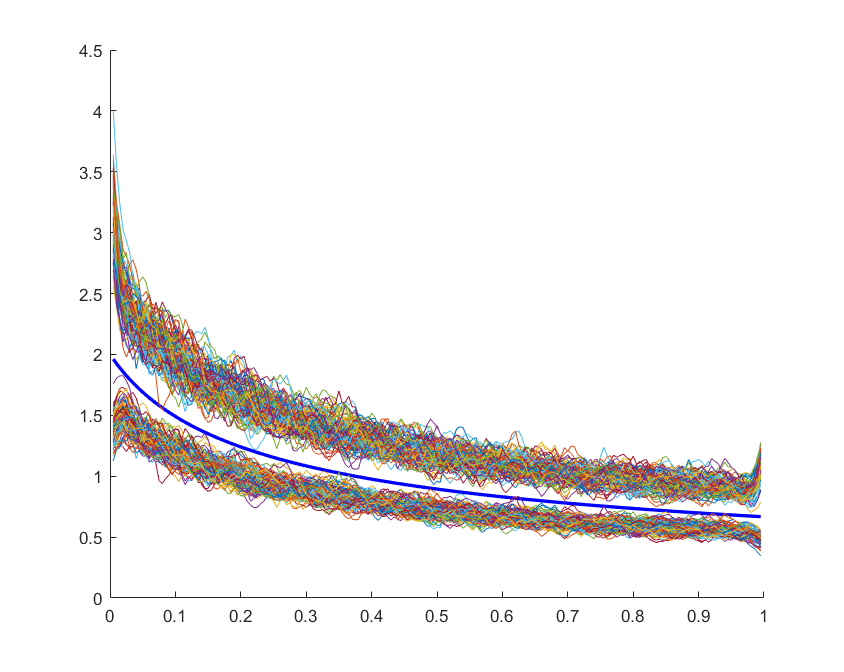}
\caption{$90\%$ confidence bands for the quantile density (in blue) of the linear distribution with the PDF $f=x+0.5$, $x\in [0,1]$. Number of independent realizations of the bands $S=100$, sample size $n=5000$.}
\label{fig:CB}
\end{figure}

\begin{table}[t!]
    \centering
    \begin{tabular}{l c c c c}
        Confidence level & $0.8$ & $0.9$ & $0.95$ & $0.99$ \\
        \hline
        & \multicolumn{4}{c}{Uniform distribution} \\
        \hline
        $n=100$& 0.891&	0.936&	0.962&	0.986 \\
        $n=500$& 0.881&	0.943&	0.966&	0.990 \\
        $n=1000$& 0.898&	0.947&	0.970&	0.993 \\
        $n=5000$& 0.907&	0.949&	0.976&	0.996 \\
        \hline
        & \multicolumn{4}{c}{Linear distribution} \\
        \hline
        $n=100$& 0.891&	0.929&	0.956&	0.987 \\
        $n=500$& 0.878&	0.936&	0.961&	0.989 \\
        $n=1000$& 0.890&	0.944&	0.970&	0.991 \\
        $n=5000$& 0.914&	0.949&	0.976&	0.996 \\
        \hline 
        & \multicolumn{4}{c}{Truncated normal distribution} \\
        \hline
        $n=100$& 0.898&	0.942&	0.964&	0.988 \\
        $n=500$& 0.887&	0.944&	0.967&	0.992 \\
        $n=1000$& 0.905&	0.950&	0.972&	0.993 \\
        $n=5000$& 0.911&	0.952&	0.978&	0.997 \\
    \end{tabular}
    \caption{Simulated coverage of the two-sided confidence bands}
    \label{tab:coverage}
\end{table}

\section{Conclusion}\label{sec:conclusion}

To the best of my knowledge, no boundary bias correction or uniform inference procedures have been developed for the quantile density (sparsity) function.
In this paper, I develop such procedures, establish their validity and show in a set of Monte Carlo simulations that they perform reasonably well in finite samples. I hope that, even when the quantile density itself is not the main inference target, these results may be employed for improving the quality of inference for other statistical objects, including the quantile function.


%

\bibliographystyle{ecta}
\bibliography{main}
 
\newpage  
\part*{Appendix}
\appendix
 
\setcounter{page}{1} 

\section{Proof of \cref{Thm:BK-expansion} and \cref{cor:conv-rate}}
First, note that
\begin{align}
    q_h(u) &\bydef \int_0^1 K_h(u-z) q(z) \, dz = \int_0^1 K_h(u-z) \left( q(u) + q'(\xi(u,z))(z-u) \right) \, dz \\
    &= q(u) \psi_h(u) + r_h(u),
\end{align}
where $r_n(u) = O(h)$ uniformly in $u\in[0,1]$ since $q$ is continuously differentiable on $[0,1]$.

Therefore,
\begin{align}
    Z_n^{bc}(u) &\bydef \frac{\sqrt{nh}\left( \hat q_h^{bc}(u)-q(u) \right)}{q(u) / \psi_h(u)} =  \frac{\sqrt{nh}\left( \hat q_h(u)- \psi_h(u) q(u) \right)}{q(u)} = Z_n^c(u) + r_n^{bc}(u),
\end{align}
where 
\begin{align}
    Z_n^c(u) &\bydef \frac{\sqrt{nh}\left( \hat q_h(u)-q_h(u) \right)}{q(u)},\\
    r_n^{bc}(u) &= \frac{\sqrt{nh} r_h(u)}{q(u)} = O\left( n^{1/2}h^{3/2}\right) \text{ uniformly in } u\in[0,1].
\end{align}

The result now follows from the asymptotically linear expansion of the process $Z_n^c$,
\begin{align}
    Z_n^c(u) = \G_n(u) + O_{a.s.}\left(h \log h^{-1} + h^{-1/2}n^{-1/4}(\log n)^{1/2}(\log \log n)^{1/4}\right), \label{eq:Znc-expansion}
\end{align}
This expansion is implied by the proof of \citet[Theorem 1]{franguridi2022}. I reproduce this proof here for completeness.

\subsection{Proof of the representation \eqref{eq:Znc-expansion}}

First, we need the following two lemmas concerning expressions that appear further in the proof.

\begin{lem}\label{Lem:int-by-parts-kqe}
Suppose that Assumptions \ref{ass:dgp} and \ref{ass:kernel} hold. Then, for every $u \in [0,1]$,
\begin{align}
    \int_0^1 K_h(u-z) \, d\left(\hat Q(z)-Q(z)\right) = - \int_0^1 \left(\hat Q(z)-Q(z)\right)  \, d K_h(u-z) + R^I_n(u),
\end{align}
where $\sup_{u\in [0,1]} |R^I_n(u)| = O_{a.s.}\left(\frac{1}{nh}\right)$.
\end{lem}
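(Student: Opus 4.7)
The plan is to reduce the claimed identity to a direct application of integration by parts, with the remainder $R_n^I(u)$ emerging as a boundary term that I will then control via the almost-sure rates for extreme order statistics. Since $Q$ is continuously differentiable, $\hat Q$ is right-continuous with jumps only at the interior grid points $i/n$, and $K_h(u-\cdot)$ is of bounded variation, the Lebesgue--Stieltjes integration-by-parts formula applies cleanly to $\int_0^1 K_h(u-z)\,d\hat Q(z)$ and $\int_0^1 K_h(u-z)\,dQ(z)$ separately.

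Subtracting the two resulting identities cancels all interior contributions and leaves
\begin{align}
    R_n^I(u) = K_h(u-1)\,(X_{(n)}-\bar x) - K_h(u)\,(X_{(1)}-\underline x).
\end{align}
To avoid any subtlety about common discontinuities between $\hat Q$ and $K_h(u-\cdot)$, the identity for $\hat Q$ is most transparently verified by Abel summation applied directly to $\hat q_h(u)=\sum_{i=1}^{n-1}K_h(u-i/n)(X_{(i+1)}-X_{(i)})$, which produces $-\int_0^1\hat Q(z)\,dK_h(u-z)$ as the interior piece together with boundary contributions of the form $K_h(u-1)X_{(n)} - K_h(u)X_{(1)}$ (up to negligible index shifts of order $1/n$); the corresponding identity for $Q$ is a standard calculus computation using continuity of $Q$ and $K_h$.

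It then remains to bound $R_n^I$ uniformly. Since $K$ is supported on $[-1/2,1/2]$, the factor $K_h(u-1)$ vanishes outside $u\in[1-h/2,1]$ and $K_h(u)$ vanishes outside $u\in[0,h/2]$, with $|K_h|\le\|K\|_\infty/h$ in either region. To control the extreme deviations, I would use $f\ge c>0$ on $[\underline x,\bar x]$, which yields $|X_{(1)}-\underline x|\le q_{\max}\,U_{(1)}$ and $|X_{(n)}-\bar x|\le q_{\max}(1-U_{(n)})$ for $U_i=F(X_i)$. The exact distributions $\Prb(U_{(1)}>t)=\Prb(1-U_{(n)}>t)=(1-t)^n$ then give $U_{(1)}\vee(1-U_{(n)})=O_{a.s.}(1/n)$ via a Borel--Cantelli argument, so $\sup_{u\in[0,1]}|R_n^I(u)|=O_{a.s.}(1/(nh))$. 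The only real obstacle is justifying the integration by parts when both integrators are of bounded variation with potential common discontinuities; the Abel-summation route bypasses this cleanly, reducing everything to an algebraic rearrangement plus the extreme-order-statistic bound.
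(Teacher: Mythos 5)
Your proposal follows essentially the same route as the paper: integration by parts for the Stieltjes integral, identification of $R_n^I(u)$ with the boundary terms $K_h(u-1)(X_{(n)}-\bar x)-K_h(u)(X_{(1)}-\underline x)$, and the bound $|K_h|\le h^{-1}\sup K$ combined with the $O_{a.s.}(n^{-1})$ behaviour of the extreme order statistics; your Abel-summation detour to sidestep possible common discontinuities of $\hat Q$ and $K_h(u-\cdot)$ (relevant when $K$ itself has jumps, e.g.\ the rectangular kernel) is a point of care the paper glosses over. One caveat: a first Borel--Cantelli argument does not deliver $U_{(1)}\vee(1-U_{(n)})=O_{a.s.}(1/n)$, since $\Prb(nU_{(1)}>c)\to e^{-c}$ is not summable in $n$; in fact $\limsup_n nU_{(1)}/\log\log n=1$ a.s., so the honest almost-sure rate carries a $\log\log n$ (or, with the crudest argument, $\log n$) factor. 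This is harmless downstream --- the resulting $O_{a.s.}(\log\log n/(nh))$ is still dominated by the $h^{-1/2}n^{-1/4}(\log n)^{1/2}(\log\log n)^{1/4}$ remainder --- and the paper asserts the same $O_{a.s.}(n^{-1})$ rate without justification, but as stated your Borel--Cantelli step does not prove the claimed order.
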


\begin{proof}
Denote $\hat\psi(z)=\hat Q(z)-Q(z)$ and note that $\hat \psi$ is a function of bounded variation a.s. Using integration by parts for the Riemann-Stieltjes integral \citep[see e.g.][Theorem 1.2.7]{stroock1998concise}, we have
\begin{align}
    \int_0^1 K_h(u-z) \, d\hat \psi(z) = - \int_0^1 \hat\psi(z)  \, d K_h(u-z) + K_h(u-1) \hat\psi(1) - K_h(u)\hat\psi(0)
\end{align}
To complete the proof, note that $\hat\psi(1)=X_{(n)}- \bar x = O_{a.s.}(n^{-1})$, $\hat\psi(0)=X_{(1)}-\underline x = O_{a.s.}(n^{-1})$, $|K_h(u-1)|\le h^{-1}K(0)$ and $|K_h(u)|\le h^{-1}K(0)$.
\end{proof}

\begin{lem}\label{Lem:int-by-parts-kde}
Suppose that Assumptions \ref{ass:dgp} and \ref{ass:kernel} hold. Then, for every $u \in [0,1]$,
\begin{align}
    \int_0^1 (\hat F(Q(z))-z) \, d K_h(u-z) &= -\G_n(u)/\sqrt{nh}.
\end{align}
\end{lem}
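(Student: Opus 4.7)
My plan is to mirror the integration-by-parts argument used for \cref{Lem:int-by-parts-kqe}, but applied to the \emph{uniformized} empirical process $\hat F\circ Q - \mathrm{id}$. The crucial difference from \cref{Lem:int-by-parts-kqe} is that the boundary terms now vanish \emph{exactly}, so the claim is an identity rather than an approximation and no $O_{a.s.}(1/(nh))$ remainder appears.

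First, I would use that $F$ is continuous under \cref{ass:dgp} to rewrite $\hat F(Q(z)) = n^{-1}\sumin \II\{X_i \le Q(z)\} = n^{-1}\sumin \II\{U_i \le z\} =: \hat F_U(z)$, the empirical CDF of the uniform transforms $U_i = F(X_i)$. Define $\hat\psi_U(z) \bydef \hat F_U(z) - z$, a BV function on $[0,1]$ satisfying $\hat\psi_U(0) = 0$ a.s.\ (since continuity of $F$ precludes any $U_i$ equalling $0$) and $\hat\psi_U(1) = 1 - 1 = 0$ deterministically.

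Next, applying the Riemann-Stieltjes integration-by-parts identity \citep[Theorem 1.2.7]{stroock1998concise} to $\hat\psi_U$ and $K_h(u-\cdot)$ just as in \cref{Lem:int-by-parts-kqe} yields
\begin{align}
    \int_0^1 \hat\psi_U(z)\, dK_h(u-z) = -\int_0^1 K_h(u-z)\, d\hat\psi_U(z) + K_h(u-1)\hat\psi_U(1) - K_h(u)\hat\psi_U(0).
\end{align}
Both boundary terms are $0$ by the preceding observation, and the remaining integral splits into the empirical average $n^{-1}\sumin K_h(u-U_i)$ and $\int_0^1 K_h(u-z)\,dz = \psi_h(u)$. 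Using symmetry of $K$ from \cref{ass:kernel} to replace $K_h(u-U_i)$ by $K_h(U_i - u)$, and noting that $U_i \sim U[0,1]$ makes $\psi_h(u) = \E K_h(U_i - u)$, the right-hand side collapses to $-\frac{1}{n}\sumin\bigBra{K_h(U_i-u) - \psi_h(u)}$, which is exactly $-\G_n(u)/\sqrt{nh}$ by the second display defining $\G_n$ in \cref{Thm:BK-expansion}.

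The only step requiring minor care is the validity of integration by parts when $\hat F_U$ has jumps and $K$ is only of bounded variation. This is not a genuine obstacle: the discontinuities of $\hat F_U$ lie at the $U_i$, which almost surely avoid the at-most-countably-many jumps of $K_h(u-\cdot)$, so the formula holds without a correction term. The whole substance of the lemma is the exact cancellation of the boundary contributions, which is what distinguishes this identity from the approximate one in \cref{Lem:int-by-parts-kqe}.
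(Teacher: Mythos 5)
Your proof is correct and follows essentially the same route as the paper's: Riemann--Stieltjes integration by parts with exactly vanishing boundary terms (since $\hat F_U(1)=1$ and $\hat F_U(0)=0$ a.s.), followed by evaluating the resulting integral against the empirical measure and the Lebesgue measure to recover $-\G_n(u)/\sqrt{nh}$. The only cosmetic difference is that you uniformize the data ($\hat F\circ Q=\hat F_U$) before integrating by parts, whereas the paper performs the change of variables $x=Q(z)$ afterwards.
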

\begin{proof}
Using integration by parts for the Riemann-Stieltjes integral \citep[see e.g.][Theorem 1.2.7]{stroock1998concise}, we have
\small
\begin{align}
    \int_0^1 (\hat F(Q(z))-z) \, d K_h(u-z) &= -\int_0^1  K_h(u-z) \, d \left[\hat F(Q(z))-z\right] + K_h(u-1) \left[ \hat F(\bar x)-1 \right] + K_h(u)\hat F(\underline x) \\
    &= -\int_0^1  K_h(u-z) \, d \left[\hat F(Q(z))-z\right],
\end{align}
\normalsize
where we used the fact that $\hat F(\bar x) = 1$ a.s. and $\hat F(\underline x)=0$ a.s. We further write
\begin{align}
        \int_0^1(\hat F(Q(z))-z) \, d K_h(u-z) &= -\int_0^1 K_h(u-z) \, d \left[\hat F(Q(z))-z\right] \\
        &= -\int_0^{\bar b} K_h(u-F(x)) \, d \left[\hat F(x)-F(x) \right] \\
        &= -\frac{1}{n} \sumin \left[K_h(u-F(b_i)) - \E K_h(u-F(b_i)) \right]\\
        &=: -\G_n(u)/\sqrt{nh},
\end{align}
where in the second equality we used the change of variables $x=Q(z)$.
\end{proof}


We now proceed with the proof of representation \eqref{eq:Znc-expansion}.

Recall the classical Bahadur-Kiefer expansion \citep{bahadur1966note,kiefer1967bahadur},
\begin{align}\label{E:BK-general}
 \hat Q(u)-Q(u) &= - q(u)\left( \hat F(Q(u)) - u \right) + r_n(u),\\
 \text{where } r_n(u) &= O_{a.s.}\left( n^{-3/4} \ell(n) \right) \text{ uniformly in } u \in [0,1],
\end{align}
and $\ell(n) \bydef (\log n)^{1/2}(\log \log n)^{1/4}$. Combine this expansion with Lemma \ref{Lem:int-by-parts-kqe} to obtain
\begin{align}
\hat q_h(u) - q_h(u) &= \int_0^1 K_h(u-z) \, d\left[ \hat Q(z)-Q(z)\right] \\
&= \int_0^1 \left[ \hat Q(z)-Q(z)\right] \, d K_h(u - z) + R_n^I(u) \\
&= \int_0^1  q(z) (\hat F(Q(z))-z)\, d K_h(u - z) + \int_0^1  R_n^{BK}(z) \, d K_h(u - z) + R_n^I(u). \label{E:qh-q}
\end{align}

\textbf{First term in \eqref{E:qh-q}}.

Since $f$ is bounded away from zero, $|q'|\le M< \infty$ for some constant $M$, and hence $|q(z)-q(u)|\le M|z-u|$. The first term in \eqref{E:qh-q} can then be rewritten as 
\begin{align}
        \int_0^1q(z) (\hat F(Q(z))-z) \, d K_h(u-z) &= q(u) \int_0^1(\hat F(Q(z))-z) \, d K_h(u-z) + R^{II}_n(u), \label{E:first-term-split}
\end{align}
where
\begin{align}
    \left|R_n^{II}(u)\right| &= \left| \int_0^1(q(z)-q(u))  (\hat F(Q(z))-z)\, d K_h(u - z) \right| \\
    &\le Mh \left| \int_0^1(\hat F(Q(z))-z) \, d K_h(u-z) \right| = Mh\left|\G_n(u)/\sqrt{nh}\right|,
\end{align}
the last equality using Lemma \ref{Lem:int-by-parts-kde}. The process $\G_n$ has the strong uniform convergence rate $\log h^{-1} / \sqrt{nh}$ \citep[see, e.g.,][]{gine2002rates}, and hence
\begin{align}
    R_n^{II}(u) = O_{a.s.}\left(\frac{h\log h^{-1}}{\sqrt{nh}}\right) \text{ uniformly over } u \in[0,1].
\end{align}
Applying Lemma \ref{Lem:int-by-parts-kde} to the first term in \eqref{E:first-term-split} allows us to rewrite
\begin{align}
    \int_0^1 q(z) (\hat F(Q(z))-z) \, d K_h(u-z) &= -q(u) \frac{\G_n(u)}{\sqrt{nh}} + O_{a.s.}\left(\frac{h\log h^{-1}}{\sqrt{nh}}\right). \label{E:first-term-rate}
\end{align}

\textbf{Second term in \eqref{E:qh-q}}.

This term can be upper bounded as follows,
\begin{align}
        &\sup_u \left|\int_0^1R_n^{BK}(z) \, d K_h(u-z) \right| \le \sup_u \int_0^1\left|  R_n^{BK}(z) \right| \,\left| d K_h(u-z)\right| \le \sup_z |R_n^{BK}(z)| TV(K_h)\\
        &= O_{a.s.}\left(n^{-3/4}\ell(n)\right) h^{-1}TV(K) =
         O_{a.s.}\left(h^{-1}n^{-3/4}\ell(n)\right), \label{E:second-term-rate}
\end{align}
where we used the properties of total variation in the first inequality and in the second equality.

Plugging \eqref{E:first-term-rate} and \eqref{E:second-term-rate} into \eqref{E:qh-q} and multiplying by $\sqrt{nh}$ yields
\begin{align}
    \sqrt{nh}\left(\hat q_h(u)-q_h(u) \right) = -q(u)\G_n(u) + O_{a.s.}(h\log h^{-1}) + O_{a.s.}\left(h^{-1/2}n^{-1/4}\ell(n)\right). \label{eq:qhat-q}
\end{align}

Note that we disregarded the term $\sqrt{nh} R_n^I(u)$, since it has the uniform order $O_{a.s.}(n^{-1/2}h^{-1/2})$, which is smaller than  $O_{a.s.}\left(h^{-1/2}n^{-1/4}\ell(n)\right)$. Dividing by $q(u)$, which is bounded away from zero for $u \in [0,1]$ due to Assumption \ref{ass:dgp}, finishes the proof. $\qed$

\subsection{Proof of \cref{cor:conv-rate}}

Let us check that the conditions of \citet[Proposition 3.1]{gine2002rates} hold. Indeed, Assumption \ref{ass:kernel} implies their condition $(K_2)$, while Assumption \ref{ass:bandwidth-for-consistency} implies their conditions (2.11) and $(W_2)$. By \citet[Remark 3.5]{gine2002rates}, their condition $(D_2)$ can be replaced by the conditions satisfied by the uniform distribution. To complete the proof, divide the expansion in \cref{Thm:BK-expansion} by $\sqrt{2\log h_n^{-1}}$ and note that the first term $\G_n(u)/\sqrt{2\log h_n^{-1}}$ converges to $\left(\int_\R K^2(x)\,dx\right)^{1/2}$ by \citet[Proposition 3.1]{gine2002rates}, while the remainder converges to zero a.s. due to Assumption \ref{ass:bandwidth-for-consistency}. $\qed$

\section{Proof of \cref{thm:CB}}

A key ingredient of the proof is to note that Lemmas 2.3 and 2.4 of \citet{chernozhukov2014gaussian} continue to hold even if their random variable $Z_n$ does not have the form $Z_n=\sup_{f\in \mathcal{F}_n} \eG_nf$ for the standard empirical process $\eG_n$, but instead is a generic random variable admitting a strong sup-Gaussian approximation with a sufficiently small remainder.

For completeness, we provide the aforementioned trivial extensions of the two lemmas here, taken directly from \citet{franguridi2022}.

Let $X$ be a random variable with distribution $P$ taking values in a measurable space $(S,\mathcal{S})$. Let $\F$ be a class of real-valued functions on $S$. We say that a function $F: S \to \R$ is an \emph{envelope} of $\F$ if $F$ is measurable and $|f(x)|\le F(x)$ for all $f \in \F$ and $x \in S$.

We impose the following assumptions (A1)-(A3) of \citet{chernozhukov2014gaussian}.
\begin{enumerate}
    \item[(A1)] The class $\F$ is \emph{pointwise measurable}, i.e. it contains a coutable subset $\G$ such that for every $f\in \F$ there exists a sequence $g_m \in \G$ with $g_m(x) \to f(x)$ for every $x \in S$.
    \item[(A2)] For some $q\ge 2$, an envelope $F$ of $\F$ satisfies $F \in L^q(P)$.
    \item[(A3)] The class $\F$ is $P$-pre-Gaussian, i.e. there exists a tight Gaussian random variable $G_P$ in $l^\infty(\F)$ with mean zero and covariance function
    \begin{align}
        \E[G_P(f)G_P(g)] = \E[f(X)g(X)] \text{ for all } f,g\in\F.
    \end{align}
\end{enumerate}

\begin{lem}[A trivial extension of Lemma 2.3 of \citet{chernozhukov2014gaussian}] \label{Lem:CCK-2-3}
Suppose that Assumptions (A1)-(A3) are satisfied and that there exist constants $\underline \sigma$, $\bar \sigma>0$ such that $\underline\sigma^2 \le Pf^2 \le \bar\sigma^2$ for all $f\in\F$. Moreover, suppose there exist constants $r_1,r_2>0$ and a random variable $\tilde Z=\sup_{f\in \F} G_Pf$ such that $\Pb(|Z-\tilde Z|>  r_1)\le r_2$. Then
\begin{align}
    \sup_{t\in\R}\left|\Pb(Z\le t)-\Pb(\tilde Z\le t)\right| \le C_{\sigma} r_1\left\{ \E \tilde Z + \sqrt{1 \vee \log(\underline\sigma/r_1)} \right\} + r_2,
\end{align}
where $C_\sigma$ is a constant depending only on $\underline\sigma$ and $\bar\sigma$.
\end{lem}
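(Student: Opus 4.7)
The plan is to follow the proof of Lemma 2.3 in Chernozhukov, Chetverikov, and Kato (2014) essentially verbatim, after observing that the only place in their argument where the specific empirical-process structure $Z=\sup_{f\in\F}\eG_n f$ plays a role is the probabilistic input $\Pb(|Z-\tilde Z|>r_1)\le r_2$, which in their setup is obtained via a strong Gaussian approximation theorem but here is taken as a hypothesis. The remainder of their proof is a generic Kolmogorov-distance comparison between two close random variables, one of which happens to be the supremum of a tight Gaussian process, and so it transfers without modification.

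Concretely, I would first establish the two-sided sandwich. For any $t\in\R$,
\begin{align}
    \Pb(Z\le t) &\le \Pb(\tilde Z\le t+r_1) + \Pb(|Z-\tilde Z|>r_1) \le \Pb(\tilde Z\le t+r_1) + r_2,\\
    \Pb(Z\le t) &\ge \Pb(\tilde Z\le t-r_1) - \Pb(|Z-\tilde Z|>r_1) \ge \Pb(\tilde Z\le t-r_1) - r_2.
\end{align}
Rearranging gives $|\Pb(Z\le t)-\Pb(\tilde Z\le t)| \le \Pb(t-r_1<\tilde Z\le t+r_1) + r_2$, reducing matters to a strip-probability bound for $\tilde Z$.

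Second, I would invoke the Gaussian anti-concentration inequality for the supremum of a tight centered Gaussian process (Corollary 2.1 of Chernozhukov, Chetverikov, and Kato 2014). Under assumptions (A1)-(A3) and the variance sandwich $\underline\sigma^2 \le Pf^2 \le \bar\sigma^2$, the Gaussian limit $G_P$ indexed by $\F$ is tight and centered with $\underline\sigma \le \bigPar{\E G_P(f)^2}^{1/2} \le \bar\sigma$, so that inequality yields
\begin{align}
    \sup_{x\in\R} \Pb(x<\tilde Z\le x+r_1) \le C_\sigma \, r_1 \bigCra{\E\tilde Z + \sqrt{1\vee\log(\underline\sigma/r_1)}},
\end{align}
for a constant $C_\sigma$ depending only on $\underline\sigma$ and $\bar\sigma$. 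Applying this bound at $x=t-r_1$ (to cover both halves of the strip up to a harmless factor absorbed into $C_\sigma$) and then taking the supremum over $t\in\R$ in the sandwich inequality delivers the stated conclusion.

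I do not expect a genuine obstacle. The substantive content is the recognition that the original CCK argument uses the empirical-process structure of $Z$ only through the strong-approximation step, together with the invocation of their Gaussian anti-concentration lemma as a black box; the remaining effort is constant bookkeeping for $C_\sigma$.
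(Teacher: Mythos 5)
Your proposal is correct and takes essentially the same route as the paper's proof: condition on the event $\{|Z-\tilde Z|\le r_1\}$ to sandwich $\Pb(Z\le t)$ between shifted distribution functions of $\tilde Z$, then apply the CCK Gaussian anti-concentration inequality (the paper cites their Lemma A.1, you cite the equivalent Corollary 2.1) to control the resulting strip probability, with the factor of $2$ from the two-sided strip absorbed into $C_\sigma$. The only cosmetic difference is that the paper applies anti-concentration separately within each of the two one-sided inequalities rather than first reducing to a single strip bound, which changes nothing of substance.
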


\begin{proof}
For every $t\in \R$, we have
\begin{align}
    \Pb(Z\le t) &= \Pb(\{Z\le t\} \cap \{|Z-\tilde Z|\le r_1\}) + \Pb(\{Z\le t\} \cap \{|Z-\tilde Z|>r_1\})\\
    &\le \Pb(\tilde Z\le t+r_1)+r_2\\
    &\le \Pb(\tilde Z\le t) + C_{\sigma} r_1\left\{ \E \tilde Z + \sqrt{1 \vee \log(\underline\sigma/r_1)} \right\} + r_2,
\end{align}
where Lemma A.1 of \citet{chernozhukov2014gaussian} (an anti-concentration inequality for $\tilde Z$) is used to deduce the last inequality. A similar argument leads to the reverse inequality, which completes the proof.
\end{proof}

\begin{lem}[A trivial extension of Lemma 2.4 of \citet{chernozhukov2014gaussian}] \label{Lem:CCK-2-4}
Suppose that there exists a sequence of $P$-centered classes $\F_n$ of measurable functions $S \to \R$ satisfying assumptions (A1)-(A3) with $\F=\F_n$ for each $n$, where in the assumption (A3) the constants $\underline\sigma$ and $\bar\sigma$ do not depend on $n$. Denote by $B_n$ the Brownian bridge on $\ell^\infty(\F_n)$, i.e. a tight Gaussian random variable in $\ell^\infty(\F_n)$ with mean zero and covariance function
\begin{align}
    \E[B_n(f)B_n(g)] = \E[f(X)g(X)] \text{ for all } f,g\in\F_n.
\end{align}
Moreover, suppose that there exists a sequence of random variables $\tilde Z_n = \sup_{f\in\F_n} B_n(f)$ and a sequence of constants $r_n\to 0$ such that $|Z_n-\tilde Z_n|=O_P(r_n)$ and $r_n \E \tilde Z_n \to 0$. Then 
\begin{align}
    \sup_{t\in \R} \left|\Pb(Z_n\le t)-\Pb(\tilde Z_n\le t)\right| \to 0.
\end{align}
\end{lem}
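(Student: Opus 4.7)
The plan is to deduce the lemma directly from the nonasymptotic bound of \cref{Lem:CCK-2-3}, applied at each $n$ with a carefully chosen pair of parameters $(r_1, r_2)$ depending on $n$. The idea is to let the approximation-error parameter $r_1$ shrink with $n$ at the rate $r_n$, while the exceedance-probability parameter $r_2$ is held at an arbitrarily small fixed level $\eta$ and sent to zero only after the limit in $n$ has been taken.

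To implement this, I would fix $\eta > 0$. The assumption $|Z_n - \tilde Z_n| = O_P(r_n)$ provides $M = M(\eta) > 0$ and $n_0$ such that $\Pb(|Z_n - \tilde Z_n| > M r_n) \le \eta$ for every $n \ge n_0$. Since $\F_n$ satisfies (A1)--(A3) with variance bounds $\underline\sigma, \bar\sigma$ uniform in $n$, and the Brownian bridge $B_n$ is precisely the pre-Gaussian process appearing in (A3), \cref{Lem:CCK-2-3} applies with $\F = \F_n$, $Z = Z_n$, $\tilde Z = \tilde Z_n$, $r_1 = M r_n$, $r_2 = \eta$ and yields
\begin{align*}
    \sup_{t\in\R} \bigl|\Pb(Z_n \le t) - \Pb(\tilde Z_n \le t)\bigr| \le C_\sigma M r_n \Bigl\{ \E \tilde Z_n + \sqrt{1 \vee \log(\underline\sigma/(M r_n))} \Bigr\} + \eta
\end{align*}
for all $n \ge n_0$. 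Taking $\limsup_{n\to\infty}$, the summand $C_\sigma M \cdot r_n \E \tilde Z_n$ vanishes by the hypothesis $r_n \E \tilde Z_n \to 0$, while $C_\sigma M \cdot r_n \sqrt{1 \vee \log(\underline\sigma/(M r_n))}$ vanishes because $r_n \to 0$ and $x \mapsto x\sqrt{\log(1/x)}$ extends continuously to zero at the origin (the constants $\underline\sigma$ and $M$ being fixed). Thus $\limsup_n \sup_t |\Pb(Z_n \le t) - \Pb(\tilde Z_n \le t)| \le \eta$, and the arbitrariness of $\eta$ completes the proof.

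The only genuine subtlety, which I would flag as the main obstacle, is the calibration of $(r_1, r_2)$. One cannot take $r_2 = r_n \to 0$, because the hypothesis $|Z_n - \tilde Z_n| = O_P(r_n)$ delivers stochastic boundedness but no quantitative tail rate. Conversely, one cannot leave $r_1$ fixed, because the right-hand side of \cref{Lem:CCK-2-3} contains the product $r_1 \E \tilde Z_n$, and only a shrinking $r_1$ combined with the hypothesis $r_n \E \tilde Z_n \to 0$ forces this term to vanish. Letting $r_1 = M r_n$ shrink while $r_2 = \eta$ stays fixed and is sent to zero last is precisely the diagonal balance that makes the argument work; beyond that choice, the proof is essentially bookkeeping, since the anti-concentration input driving everything is already absorbed into \cref{Lem:CCK-2-3}.
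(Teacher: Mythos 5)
Your proof is correct and takes essentially the same route as the paper: both arguments reduce the claim to \cref{Lem:CCK-2-3} applied with $r_1$ shrinking at the rate $r_n$, then use $r_n\E\tilde Z_n\to 0$ together with $r_n\sqrt{\log(1/r_n)}\to 0$ to make the resulting bound vanish. The only difference is bookkeeping in unpacking the $O_P(r_n)$ hypothesis: the paper chooses a slowly diverging sequence $\beta_n$ with $\beta_n r_n(1\vee\E\tilde Z_n)=o(1)$, so that $r_1=\beta_n r_n$ and $r_2=\Pb\left(|Z_n-\tilde Z_n|>\beta_n r_n\right)$ tend to zero simultaneously, whereas you fix $M(\eta)$ with $r_2=\eta$ and send $\eta\to 0$ only after taking the $\limsup$ --- two equivalent devices.
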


\begin{proof}
Take $\beta_n\to\infty$ sufficiently slowly such that $\beta_nr_n(1\vee \E\tilde Z_n)=o(1)$. Then since $\Pb(|Z_n-\tilde Z_n|>\beta_n r_n)=o(1)$, by Lemma \ref{Lem:CCK-2-3}, we have
\begin{align}
    \sup_{t\in \R} \left|\Pb(Z_n\le t)-\Pb(\tilde Z_n\le t)\right| = O\left( r_n(\E\tilde Z_n + |\log (\beta_n r_n)|) \right) + o(1) = o(1).
\end{align}
This completes the proof.
\end{proof}

I now go back to the proof of \cref{thm:CB}. \citet[][Proposition 3.1]{chernozhukov2014gaussian} establish a sup-Gaussian approximation of $W_n^\G$; namely, there exists a tight centered Gaussian random variable $B_n$ in $\ell^\infty([0,1])$ with the covariance function
\begin{align}
\E[B_n(u)B_n(v)] = \text{Cov}\left(K_{h}(U-u), K_{h}(U-v) \right), \quad u,v\in[0,1],
\end{align}
where $U \sim \text{Uniform}[0,1]$, such that, for $\tilde W_n \bydef \sup_{u\in[0,1]} B_n(u)$, we have the approximation
\begin{align}
    W_n^\G = \tilde W_n + O_{p}\left((nh)^{-1/6}\log n\right) \label{eq:WG-Gaussian}.
\end{align}

\cref{Lem:CCK-2-4} and \citet[][Remark 3.2]{chernozhukov2014gaussian} then imply
\begin{align}
    \sup_{t\in \R} \left|\Pb(W_n^\G\le t)-\Pb(\tilde W_n\le t)\right| \to 0. \label{E:Kolm-conv-1}
\end{align}
On the other hand, from \cref{Thm:BK-expansion} it follows that
\begin{align}
    W_n^{bc} = W_n^\G + O_{a.s.}\left( n^{1/2} h^{3/2} + h \log h^{-1} + h^{-1/2}n^{-1/4}\ell(n) \right), \label{eq:Wbc-WG}
\end{align}
where we define $\ell(n) \bydef (\log n)^{1/2}(\log \log n)^{1/4}$.
Substituting \eqref{eq:WG-Gaussian} into \eqref{eq:Wbc-WG} yields
\begin{align}
    W_n^{bc} = \tilde W_n + O_{a.s.}\left((nh)^{-1/6}\log n + n^{1/2} h^{3/2} + h \log h^{-1} + h^{-1/2}n^{-1/4}\ell(n) \right). \label{eq:Wbc-Gaussian}
\end{align}
\cref{ass:bandwidth} implies that $n^{1/2} h^{3/2} = o(\log^{-1/2}(n))$ and $h^{-1/2}n^{-1/4}\ell(n) = o(\log^{-1/2}(n))$. Therefore,
\begin{align}
    W_n^{bc}-\tilde W_n=o_p(\log^{-1/2} n).    
\end{align}
It now follows from \citet[][Remark 3.2]{chernozhukov2014gaussian} that
\begin{align}
    \sup_{t\in \R} \left|\Pb(W_n^{bc}\le t)-\Pb(\tilde W_n\le t)\right| \to 0. \label{E:Kolm-conv-2}
\end{align}
Applying the triangle inequality to equations \eqref{E:Kolm-conv-1} and \eqref{E:Kolm-conv-2} yields 
\begin{align}
    \sup_{t\in \R} \left|\Pb(W_n^{bc}\le t)-\Pb(W_n^\G\le t)\right| \to 0. \label{E:Kolm-conv-3}
\end{align}
On the other hand, considering the sample $U_i=F(X_i) \sim \text{iid Uniform}[0,1]$, we have 
\begin{align}
    W_n^{bc,U[0,1]} = W_n^\G +  O_{a.s.}\left((nh)^{-1/6}\log n + n^{1/2} h^{3/2} + h \log h^{-1} + h^{-1/2}n^{-1/4}\ell(n) \right).
\end{align}
A similar argument yields
\begin{align}
    \sup_{t\in \R} \left|\Pb(W_n^{bc}\le t)-\Pb(W_n^{bc,U[0,1]}\le t)\right| \to 0,
\end{align}
which completes the proof. $\qed$

\end{document}